%
%
%
%
%
%
%
\documentclass[%
pra,
twocolumn,
 amsmath,amssymb,
 aps,
]{revtex4-2}
\bibliographystyle{apsrev4-2}
\usepackage{graphicx}
\usepackage{epstopdf}
\usepackage{dcolumn}
\usepackage{amsmath}
\usepackage{amsthm}
\usepackage{hyperref}


\newtheorem{theorem}{Theorem}

\begin{document}

\preprint{APS/123-QED}

\title{Quantum multigraph states and multihypergraph states}
\author{Xiao-Dong Zhang$^{1}$}
\author{Bin-Bin Cai$^{1,2}$}
\altaffiliation{Corresponding author. Email address: cbb@finu.edu.cn}
\author{Song Lin$^{1}$}
\altaffiliation{Corresponding author. Email address: lins95@fjnu.edu.cn}
\affiliation{
	$^{1}$College of Computer and Cyber Security, Fujian Normal University, Fuzhou 350117, China\\
	$^{2}$Digital Fujian Internet-of-Things Laboratory of Environmental Monitoring, Fujian Normal University, Fuzhou 350117, China}
\date{\today}

\begin{abstract}
We proposed two classes of multiparticle entangled states, the multigraph states and multihypergraph states, defined by unique operations on the edges and hyperedges. A key discovery is the one-to-one correspondence between the proposed multihypergraph states and the generalized real equally weighted states when $d$ is prime. While for composite $d$, multihypergraph states form a subset of the generalized real equally weighted states. Meanwhile, we detailed a method for constructing real equally weighted states from hypergraph states and revealed the generalized real equally weighted states which cannot be generated from $d$-dimensional hypergraph states.

\keywords{Multiparticle entangled states, graph theory, graphs states, hypergraph states} 
\end{abstract}
 
\pacs{03.67.Dd, 03.65.Ta, 03.67.Hk}

\maketitle


\section{Introduction}

Multiparticle entangled states play a crucial role in quantum computing and quantum information fields. Hillery et al. \cite{hillery1999quantum} pioneered a secret sharing scheme in 1999 using the Greenberger-Horne-Zeilinger state. In 2001, Briegel and Raussendorf \cite{briegel2001persistent} introduced the ``cluster state", a specific type of multiparticle entangled state, and subsequently proposed a new quantum computing approach based on cluster state measurements \cite{raussendorf2001one}. In the same year, the cluster states were used to design quantum error-correcting\cite{schlingemann2001quantum}. And, Schlingemann and Werner \cite{schlingemann2001quantum} firstly to employ graph representations to describe these quantum states. In 2003, Raussendorf et al. \cite{raussendorf2003measurement} formally coined the term ``graph state" in their exploration of measurement-based quantum computing. Graph states, characterized by connections through  controlled-$Z$ gates, facilitate the simulation of any unitary gate through specific measurement bases and sequences \cite{raussendorf2003measurement}. In 2013, Rossi et al. \cite{rossi2013quantum} and Qu et al. \cite{qu2013encoding} extended the graph states to the hypergraph states. And a one-to-one correspondence is discovered between hypergraph states and real equally weighted states (REWSs)\cite{rossi2013quantum,qu2013encoding}. Hypergraph states, notable for their ``Pauli universality" in measurement-based computation \cite{miller2016hierarchy}, enable the simulation of universal unitary gates solely through Pauli measurements of particles within the state. The class of multiparticle entangled states, graph states and hypergraph states, have been extensively studied\cite{hein2004multiparty,guhne2005bell,lu2007experimental,qu2013relationship,guhne2014entanglement,lyons2015local,gachechiladze2016extreme,ghio2017multipartite,morimae2017verification,zhu2019efficient,shettell2020graph,baccari2020scalable,zhou2022entanglement} and widely used in quantum computing\cite{childs2005unified,walther2005experimental,nielsen2006cluster,raussendorf2006fault,briegel2009measurement,broadbent2009universal,mantri2013optimal,hayashi2015verifiable,morimae2016measurement,fitzsimons2018post,gachechiladze2019changing} and quantum information fields\cite{markham2008graph,bell2014experimental,banerjee2020quantum,li2022efficient}. 

 The focus on graph states and hypergraph states within 2-dimensional quantum systems has garnered considerable scholarly interest, leading to an expansion of research into $d$-dimensional quantum systems. Studies have been conducted on $d$-dimensional graph states \cite{looi2008quantum,keet2010quantum,tang2013greenberger}, followed by investigations into $d$-dimensional hypergraph states \cite{steinhoff2017qudit,xiong2018qudit,malpetti2022multipartite}. Xiong et al. \cite{xiong2018qudit} elucidated the quantitative relationship between $d$-dimensional hypergraph states and generalized real equally weighted states (GREWSs), noting that $d$-dimensional hypergraph states are encompassed within GREWSs and are significantly fewer in number. The existing research on graph states and hypergraph states \cite{hein2004multiparty,guhne2005bell,lu2007experimental,shettell2020graph,baccari2020scalable,childs2005unified,walther2005experimental,nielsen2006cluster,raussendorf2006fault,markham2008graph,briegel2009measurement,broadbent2009universal,mantri2013optimal,bell2014experimental,hayashi2015verifiable,morimae2016measurement,fitzsimons2018post,rossi2013quantum,qu2013encoding,miller2016hierarchy,qu2013relationship,guhne2014entanglement,lyons2015local,gachechiladze2016extreme,ghio2017multipartite,morimae2017verification,zhu2019efficient,gachechiladze2019changing,banerjee2020quantum,zhou2022entanglement,looi2008quantum,keet2010quantum,tang2013greenberger,steinhoff2017qudit,xiong2018qudit,malpetti2022multipartite,li2022efficient} primarily employed simple graphs \cite{berge1973graphs,balakrishnan1997schaum} and weighted graphs \cite{chartrand2013first}. In this paper, we proposed, for the first time, the quantum state corresponding to the multigraph and multihypergraph, where multifarious edges can connect the same vertices and highlighted some of its advantageous properties.

The structure of this paper is outlined as follows. In Sec. \ref{sec:2}, the review of essential graph theory concepts and the definitions of $d$-dimensional graph states and hypergraph states are provided. In Sec. \ref{sec:3}, the definition of multigraph states and multihypergraph states is proposed. In Sec. \ref{sec:4}, the association between multihypergraph states and GREWSs is detailed. In Sec. \ref{sec:5}, a summary and an outlook on future research are concluded.
\section{\label{sec:2}PRELIMINARIES}

In this section, we succinctly overview the foundational aspects of graph theory pertinent to graph states and hypergraph states, along with their respective definitions.

\subsection{Fundamentals of graph theory}

Graph theory encompasses a diverse array of graph types, including simple graphs, weighted graphs, multigraphs, hypergraphs, weighted hypergraphs, and multihypergraphs \cite{berge1973graphs,balakrishnan1997schaum,chartrand2013first,bollobas1998modern}. A simple graph is characterized by edges that connect two vertices, with a maximum of one edge between any two vertices. Weighted graphs are an extension of simple graphs, assigning weights to each edge. Multigraphs come in two forms. One form is described that multiple edges connecting two vertices are identical. The same edge allowed to appear multiple times. And the multigraphs can be viewed as weighted graphs \cite{chartrand2013first}. The other form is considered that each edge as a distinct entity, akin to nodes. The multiple edges between two vertices are different, as exemplified by the electrical network multigraph in Ref. \cite{bollobas1998modern}. In hypergraphs, edges may connect any number of vertices, with those connected to a single vertex termed as rings \cite{berge1973graphs}. In hypergraphs field, the weighted hypergraphs and multihypergraphs also exist. And their definition are similar to those of weighted graphs and multigraphs, except that weighted hypergraphs and multihypergraphs can connect any number of vertices.

A graph consisting of $N$ vertices can be described as the set pair $G=(V, E)$, where $V=\mathbb{Z}_N$ denotes the set of vertices and $E$ denotes the set of edges. The $e(j,k)\in E$ represents the edge connecting vertices $j$ and $k$. Similarly, a hypergraph of $N$ vertices is represented as the set pair $\scriptstyle\widetilde{G}=\left\{V,\widetilde E\right\}$, with  $\scriptstyle\widetilde E\subseteq\wp(V)$ as the set of hyperedges. Here, $\wp(V)$ denotes the power set of $V$, and $e\left(v_0,v_1,\cdots,v_{t-1}\right)\in \widetilde E$ is the hyperedge connecting $t \in \mathbb{Z}_{N+1}$vertices, where $v_0 < v_1 < \cdots <v_{t-1} \in \mathbb{Z}_{N}$ indicate the connected $t$ vertices. Notably, when $t$ equals 0, $e\left(v_0,v_1,\cdots,v_{t-1}\right)$ refers to an empty edge. Simple graphs and hypergraphs can be transformed into weighted graphs through the addition of weights to their respective edges. Within such weighted graphs, the weight assigned to an edge $e$ is denoted as $m_e$. Consequently, the relationships among simple graphs, multigraphs, hypergraphs, and multihypergraphs are elucidated and can be visualized as depicted in Fig.\ref{fig1}.
\begin{figure}[h]%
	\hspace{-10mm}
	\includegraphics[width=0.45\textwidth]{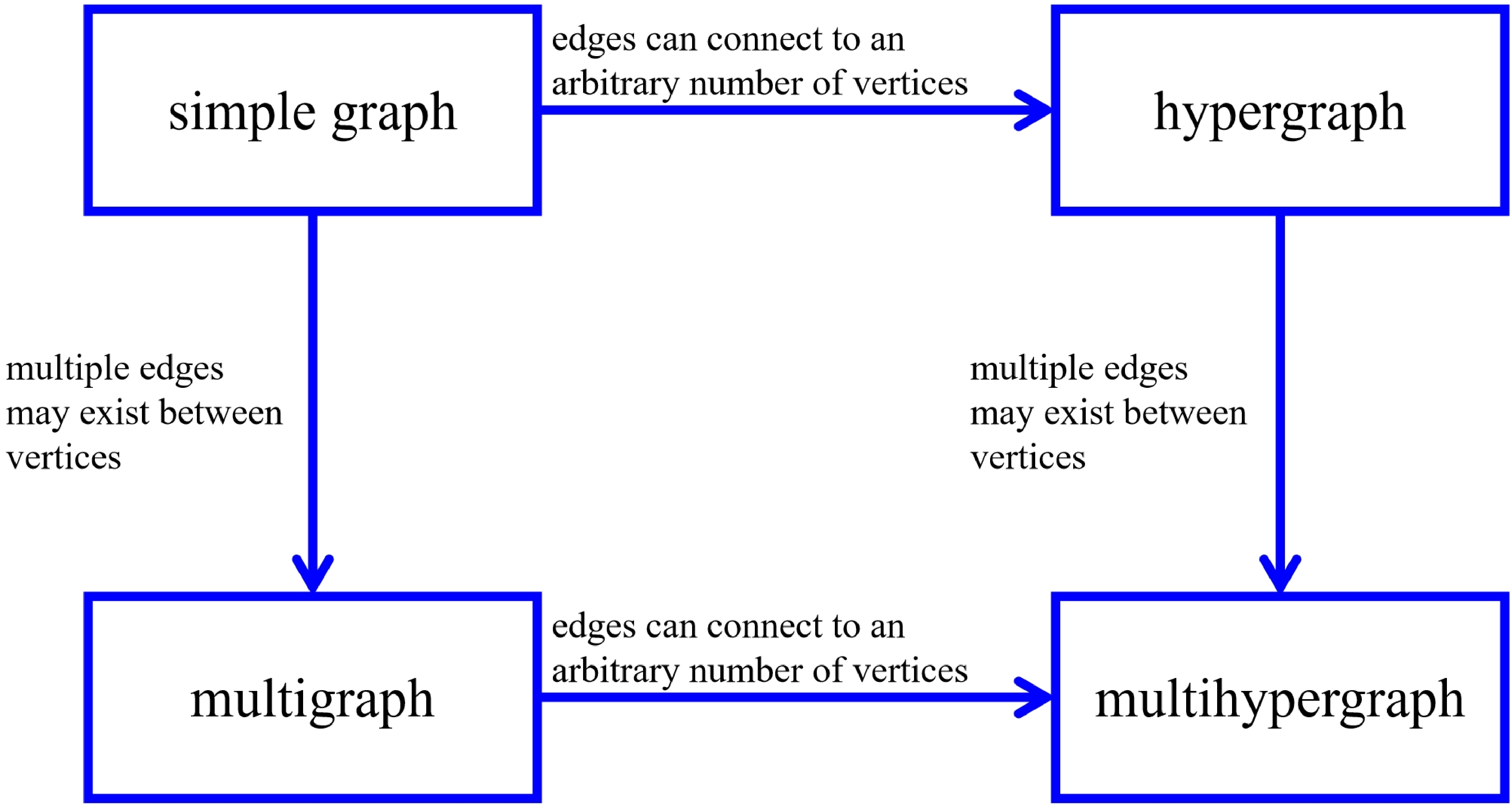}
	\caption{Relationships among simple graphs, multigraphs, hypergraphs, and multigraphs.}\label{fig1}
\end{figure}

\subsection{A review of graph states and hypergraph states}
\subsubsection{Graph States}
The concept of the graph state was formally introduced, as elucidated through research presented in Refs. \cite{briegel2001persistent,raussendorf2001one,schlingemann2001quantum,raussendorf2003measurement} . Initially termed as ``cluster states" \cite{briegel2001persistent}, these states were later renamed as ``graph states" \cite{schlingemann2001quantum,raussendorf2003measurement} following the adoption of graph theoretical representations to describe these quantum states. The construction of cluster states is depicted in Fig.\ref{fig2}, where dots symbolize the quantum state $\left| {+_d} \right\rangle  =  d^{-1/2}\sum_{i = 0}^{d-1}  \left| i \right\rangle $, solid lines represent  controlled-$Z$ gates, and dashed lines indicate additional $\left| +_d \right\rangle$ states and  controlled-$Z$ gates. In the graphical depiction of cluster states, the points and solid lines in Fig.\ref{fig2} correspond to the vertices and edges in graph theory. In subsequent discussions within this paper, we will exclusively use the term ``graph states" for clarity and consistency.
\begin{figure}[h]%
	\centering
	\includegraphics[width=0.35\textwidth]{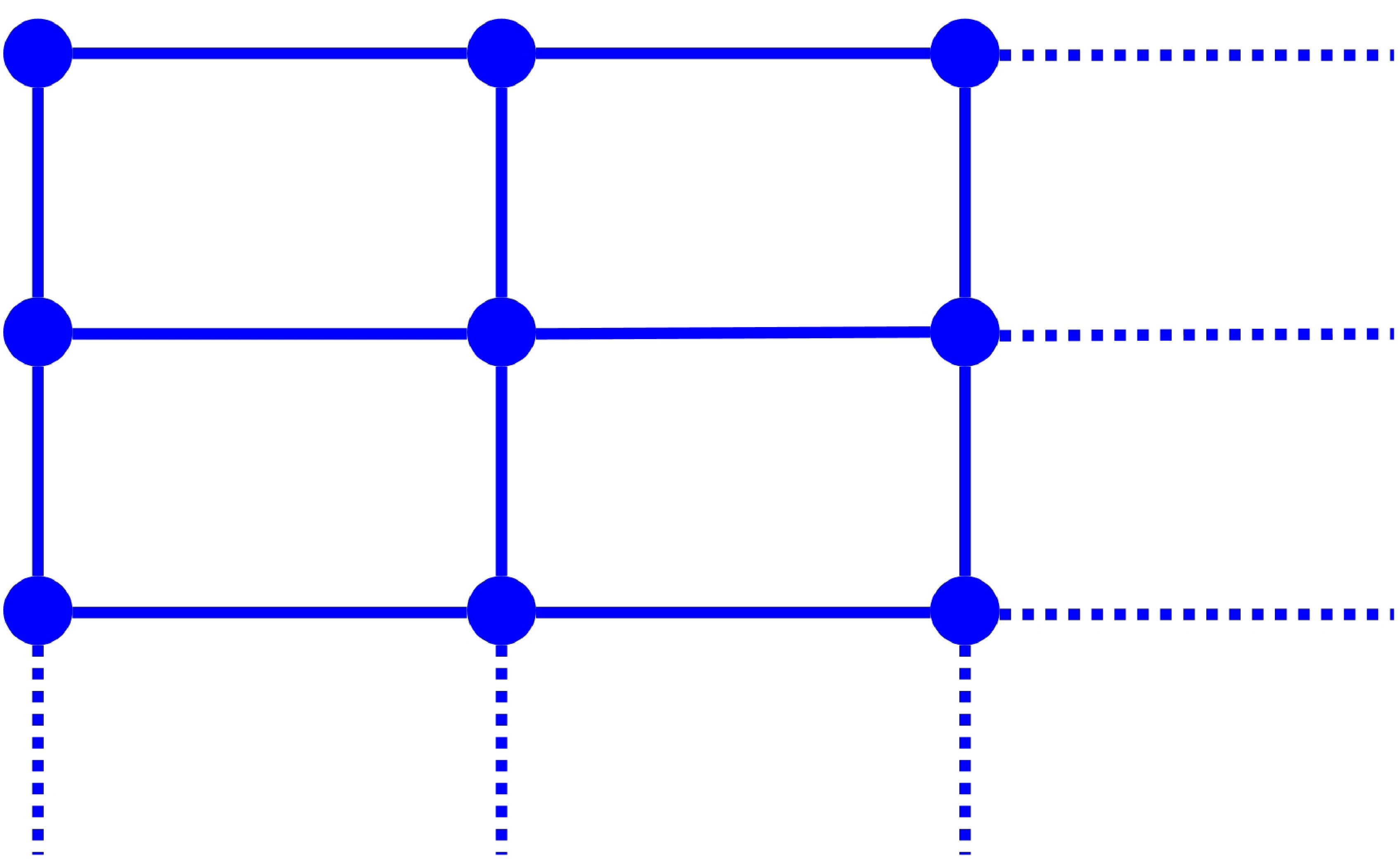}
	\caption{Cluster state. The dots denote quantum $\left| +_d \right\rangle$ states, solid lines correspond to  controlled-$Z$ gates, and dashed lines indicate additional $\left| +_d \right\rangle$ states and  controlled-$Z$ gates.}\label{fig2}
\end{figure}

In accordance with varying graph structures, distinct graph states can be formulated. Prior to delving into the description of such quantum states, it is imperative to define the vertices and edges within the graph from a quantum perspective. The formal representation of vertices and edges in graph states has been previously delineated in Refs. \cite{raussendorf2003measurement,looi2008quantum,hein2004multiparty}. For a graph $G=(V,E)$ comprising $N$ vertices, the corresponding graph state is characterized by $N$ vertices denoted as $\left|+_d\right\rangle^{\otimes N}=\scriptstyle\underbrace{\scriptstyle\left|+_d\right\rangle \otimes \left|+_d\right\rangle \otimes \cdots \otimes \left|+_d\right\rangle}_N$. The edge connecting vertices $j$ and $k$ is represented by
\begin{multline}
CZ_{e(j,k)}= \sum_{i_0, i_1, \cdots, i_{N-1} = 0}^{d-1}\omega_d^{i_j\times i_k}\\[-1mm]\cdot \left| i_0, i_1, \cdots, i_{N-1}\right\rangle \left\langle i_0, i_1,\cdots,i_{N-1} \right|,
\label{eq:1}
\end{multline}
where $\omega_d=e^{2\pi \mathbf{i}/d}$($\mathbf{i}=\sqrt{-1}$). The $d$-dimensional graph state corresponding to graph $G$ is defined as
\begin{equation}
	\left| G\right\rangle= \prod_{e\in E}{CZ_e}^{m_e} \left|+_d\right\rangle^{\otimes N},
	\label{eq:2}
\end{equation}
where $m_e\in\mathbb{Z}_d$ represents both the weight of an edge and the number of operations performed.
\subsubsection{Hypergraph States}
In 2013, the concept of the hypergraph state was introduced in Refs. \cite{rossi2013quantum,qu2013encoding}, which extended the framework of quantum graph states. Within a $d$-dimensional hypergraph state, vertices are defined as the $d$-dimensional single-particle states $\left|+_d\right\rangle=d^{-1/2}\sum_{i=0}^{d-1}\left| i\right\rangle$. The hyperedge connects vertices $v_0,\cdots,v_{t-1}$, is characterized by 
\begin{multline}
	{\widetilde{CZ}}_{e(v_0,v_1,\cdots,v_{t-1})}=\sum_{i_0,i_1,\cdots,i_{N-1}=0}^{d-1} \Bigl(  \omega_d^{i_{v_0}\times i_{v_1}\times \cdots \times i_{v_{t-1}}}\Bigr. \\[-1mm]
	\Bigl. \cdot \left| i_0, i_1,\cdots,i_{N-1}\right\rangle \left\langle i_0, i_1,\cdots,i_{N-1}\right| \Bigr).
	\label{eq:3}
\end{multline}
The $d$-dimensional hypergraph state corresponding to graph $\widetilde{G}=(V,\widetilde E)$ is defined as
\begin{equation}
	\left|\widetilde{G}\right\rangle=\left(\prod_{e\in \widetilde E}{{\widetilde{CZ}}_e}^{m_e}\right)\left|+_d\right\rangle^{\otimes N},
	\label{eq:4}
\end{equation}
where $\left|+_d\right\rangle^{\otimes N}$ refers to the $N$ vertices comprising the hypergraph state. As per the established definition of hyperedges, a $d$-dimensional hypergraph state comprises $2^N-1$ varieties of non-empty hyperedges, with each edge subject to $m_e\in\mathbb{Z}_d$ iterations of operations. Consequently, the count of non-trivial $d$-dimensional hypergraph states amounts to $d^{2^N-1}$, aligning with the findings reported in Refs. \cite{rossi2013quantum,qu2013encoding,xiong2018qudit}. When articulated within the framework of stabilizer systems, the stabilizer for a $d$-dimensional hypergraph state is characterized as follows
\begin{multline}
	\fontsize{8}{10}
	g_k=\left( \prod_{e\in\widetilde E} {{\widetilde{CZ}}_e}^{m_e}\right) X_k\left(\prod_{e\prime \in \widetilde E}{{\widetilde{CZ}}_{e\prime}}^{{d-m}_{e\prime}}\right)\\ = X_k\prod_{e \in\widetilde E,k\in e}{{\widetilde{CZ}}_{e \backslash \left\{k\right\}}}^{m_e(d-1)}.
	\label{eq:5}
\end{multline}
The proof of Eq. \eqref{eq:5} is delineated in the appendix \ref{ap:A}. Noting that be noted that the stabilizers presented here are equivalent to those described in Ref. \cite{xiong2018qudit}, even though they are distinct in form.
\section{\label{sec:3}The proposed multigraph states and multihypergraph states}
In this section, we proposed concepts of multigraph states and multihypergraph states. Traditional graph states and hypergraph states are typically constructed by a series of $Z$ gates and multiparticle  controlled-$Z$ gates. Our approach, however, involves a novel series of quantum gates specific to the construction of multigraph states and multihypergraph states. Initially, we describe a series of $d$-dimensional quantum gates, followed by an explanation of how these gates facilitate the construction of multigraph states and multihypergraph states. In the context of this paper, the multigraph and multihypergraph utilized differ in that various edges (or hyperedges) connecting the same vertices represent distinct entities. We employ a set of sequences related to vertices to encode these edges (hyperedges). In both the multigraph and the multihypergraph, $N$ integer variables $s_0,s_1,\cdots,s_{N-1}\in\mathbb{Z}_d^\ast$ ($\mathbb{Z}_d^\ast=\mathbb{Z}_d\backslash\left\{0\right\}$) are defined corresponding to the $N$ vertices. For an edge or hyperedge linking $t$ identical vertices, $v_0,v_1,\cdots,v_{t-1}$, is encoded using an integer sequence $s=(s_{v_0},s_{v_1},\cdots,s_{v_{t-1}})$. Such edges or hyperedges can be represented as $e\left(v_0,v_1,\cdots,v_{t-1},s\right)$. Consequently, a multigraph or multihypergraph is denoted as a set pair comprising vertices and edges, $\scriptstyle\widehat{G}=(V,\widehat E)$  or $\scriptstyle\widehat{\widetilde{G}}=(V,\widehat{\widetilde{E}})$. Here, $t=2$ characterizes a multigraph, whereas $t\in\mathbb{Z}_{N+1}$ typifies a multihypergraph.

In 2012, Howard and Vala \cite{howard2012qudit} explored the qudit analog of the qubit $\pi/8$ gate within the clifford hierarchy \cite{gottesman1999demonstrating}. They proposed a formulation for the qudit gate, presented below
\begin{equation}
	U_v=U_{\left(l_0,l_1,\cdots,l_{d-1}\right)}=\sum_{k=0}^{d-1}{\omega_d^{h_k}\left| k \right\rangle \left\langle k \right|}.
	\label{eq:6}
\end{equation}
The selection of $U_v$ based on $h=\left(h_0,h_1,\cdots,h_{d-1}\right)$ forms a series of quantum gates with notable applications in magic-state distillation \cite{campbell2012magic}. Define $h_k=\sum_{j=0}^{\eta}{a_jk^j}$, where $\eta\in\mathbb{Z}_d^*$, and $\sum_{j=1}^{\eta}{a_jk}^j$ represents a polynomial of degree $\eta$ in the integer residual ring $\mathbb{Z}_d$. For $h_k=k^2/2-k/2$ and $h_k=k^3/6$, $U_v$ corresponds to the $d$-dimensional $S$ gate and $\pi/8$ gate Refs. \cite{gottesman1998fault,prakash2018normal}, respectively. In this paper, $U_v$ is employed with $h_k=k^\eta$. Specifically, for $\eta=1$, $U_v=Z;$ for $\eta=2$, $U_v=ZSS$; and for $\eta=3$, $U_v=T^6$. The ensuing definitions of multigraph states and multihypergraph states are predicated on this quantum gate typology. Here, the quantum gate linked with the edges of the multigraph states and multihypergraph states serves as the  controlled gate of $U_v$ for 2-particle and multiparticle systems.
\subsection{Multigraph state}
In the multigraph state, the vertex is represented by a $d$-dimensional single-particle state $\left|+_d\right\rangle=d^{-1/2}\sum_{i=0}^{d-1}\left| i\right\rangle$. The edges within this framework are delineated through
\begin{multline}
	{\widehat{CZ}}_{e\left(j,k,s\right)}=\sum_{i_0,i_1,{\cdots,i}_{N-1}=0}^{d-1}\Bigl(\omega_d^{{i_j}^{s_j}\times{i_k}^{s_k}}\Bigr. \\[-2mm] \Bigl. \cdot \left|i_0,i_1,\cdots,i_{N-1}\right\rangle \left\langle i_0,i_1,{\cdots,i}_{N-1}\right|\Bigr).
	\label{eq:7}
\end{multline}
The multigraph state, corresponding to the multigraph $\widehat{G}=(V,\widehat E)$, can be represented by
\begin{equation}
	\left| \widehat{G}\right\rangle=\prod_{e\in\widehat E} {{\widehat{CZ}}_e}^{m_e} \left|+_d\right\rangle^{\otimes N},
	\label{eq:8}
\end{equation}
where $m_e\in\mathbb{Z}_d$ represents both the weight of an edge and the number of operations performed. We provide an illustrative example of a multigraph state with $N=3$ and $d=256$. Here, vertices $v_0,v_1,v_2$ symbolize the three primary colors, red, green and blue. Correspondingly, three integer variables, $s_0,s_1,s_2$, represent the RGB values of these primary colors as encoded in computer systems. For instance, the multihypergraph state $\scriptstyle\left|\widehat{G}\right\rangle={\widehat{CZ}}_{e_4}{\widehat{CZ}}_{e_3}{\widehat{CZ}}_{e_2}{\widehat{CZ}}_{e_1}{\left|+_{256}\right\rangle}^{\otimes3}$ correlates to the multihypergraph depicted in Fig.\ref{fig3}(a).
\begin{figure}[h]%
	\centering
	\includegraphics[width=0.45\textwidth]{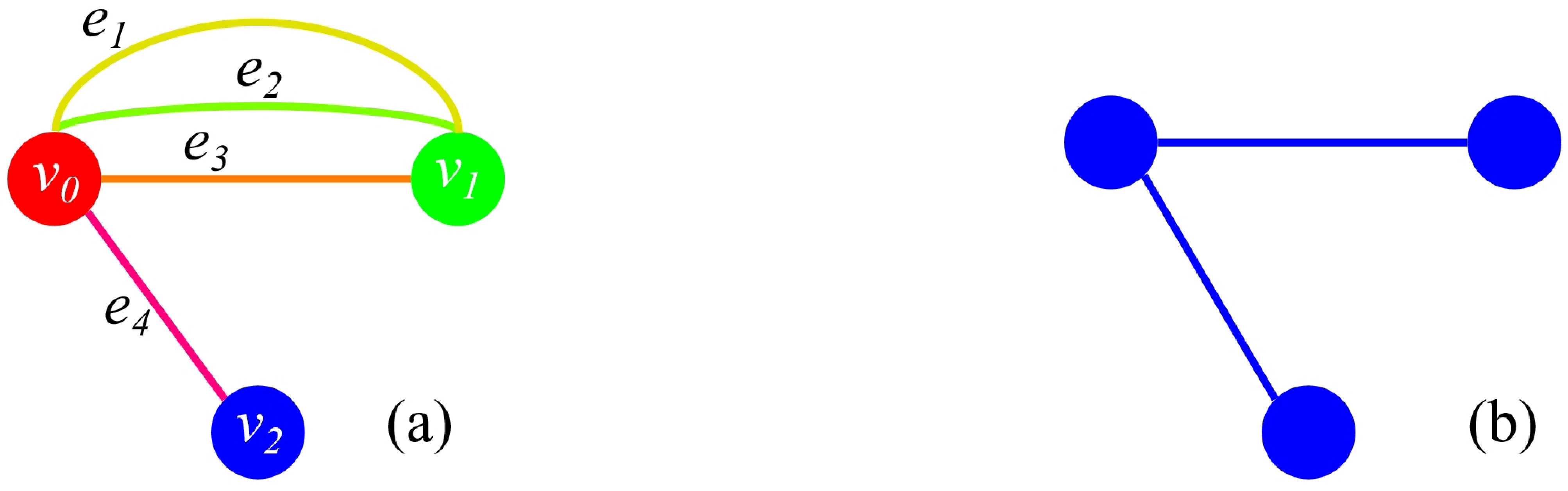}
	\caption{3-vertex multigraph and 3-vertex graph. (a) displays a multigraph with four edges, $e_1$ to $e_4$, each marked by unique RGB values (225,225,0), (125,255,0), (255,125,0) and (255,0,125). Conversely, (b) shows a graph with uniform coloring, where all vertices and edges are blue (0,0,255).}\label{fig3}
\end{figure}
\subsection{Multihypergraph state}
In the multihypergraph state, the hyperedges within this framework are delineated through
\begin{multline}
	\fontsize{8}{10}
	{\widehat{\widetilde{CZ}}}_{e(v_0,v_1,\cdots,v_{t-1},s)}=\sum_{i_0,i_1,{\cdots,i}_{N-1}=0}^{d-1}\Bigl( \omega_d^{{i_{v_0}}^{s_{v_0}}\times\cdots\times{i_{v_{t-1}}}^{s_{v_{t-1}}}}\Bigr. \\[-2mm] \Bigl. \cdot \left|i_0,i_1,{\cdots,i}_{N-1}\right\rangle \left\langle i_0,i_1,{\cdots,i}_{N-1}\right| \Bigr).
	\label{eq:9}
\end{multline}
The multihypergraph state, corresponding to the structure of multihypergraph $\scriptstyle\widehat{\widetilde{G}}=(V,\scriptstyle\widehat{\widetilde E})$, is characterized as
\begin{equation}
	\left|\widehat{\widetilde{G}}\right\rangle=\left(\prod_{e\in \widehat{\widetilde E}}{{\widehat{\widetilde{CZ}}}_e}^{m_e}\right)\left|+_d\right\rangle^{\otimes N}.
	\label{eq:10}
\end{equation}
It is straightforward to ascertain that the quantity of multiple hyperedges equals $d^{N-1}$. Each quantum operation associated with these hyperedges undergoes $d$ cycles within the $d$-dimensional quantum state, culminating in a total of $d^{d^{N-1}}$ multihypergraph states. The stabilizers of a multihypergraph state can be articulated as
\begin{multline}
	\fontsize{8}{10}
	g_k=\left(\prod_{e\in \widehat{\widetilde E}} {{\widehat{\widetilde{CZ}}}_e}^{m_e}\right) X_k\left(\prod_{e\prime \in \widehat{\widetilde E}}{{\widehat{\widetilde{CZ}}}_{e\prime}}^{{d-m}_{e\prime}}\right)\\ =X_k\prod_{e\in \widehat{\widetilde E},k\in e}{{\widehat{\widetilde{CZ}}}_{e\backslash \left\{k\right\}}}^{m_e(d-1)}.
	\label{eq:11}
\end{multline}
The proof of Eq. \eqref{eq:11} is delineated in the appendix \ref{ap:B}. In continuation of the aforementioned concepts, we consider an example of a multihypergraph state with $N=3$ and $d=256$. The vertices $v_0,v_1,v_2$ are assigned to represent the three primary colors, red, green and blue. The integer variables $s_0,s_1,s_2$ correspond to the RGB values of these colors as they are represented in computer systems. Specifically, the multihypergraph state $\scriptstyle\left|\widehat{\widetilde{G}}\right\rangle={\widehat{\widetilde{CZ}}}_{e_4}{\widehat{\widetilde{CZ}}}_{e_3}{\widehat{\widetilde{CZ}}}_{e_2}{\widehat{\widetilde{CZ}}}_{e_1}\left|+_{256}\right\rangle^{\otimes3}$ aligns with the multihypergraph illustrated in Fig.\ref{fig4}(a).
\begin{figure}[h]%
	\centering
	\includegraphics[width=0.45\textwidth]{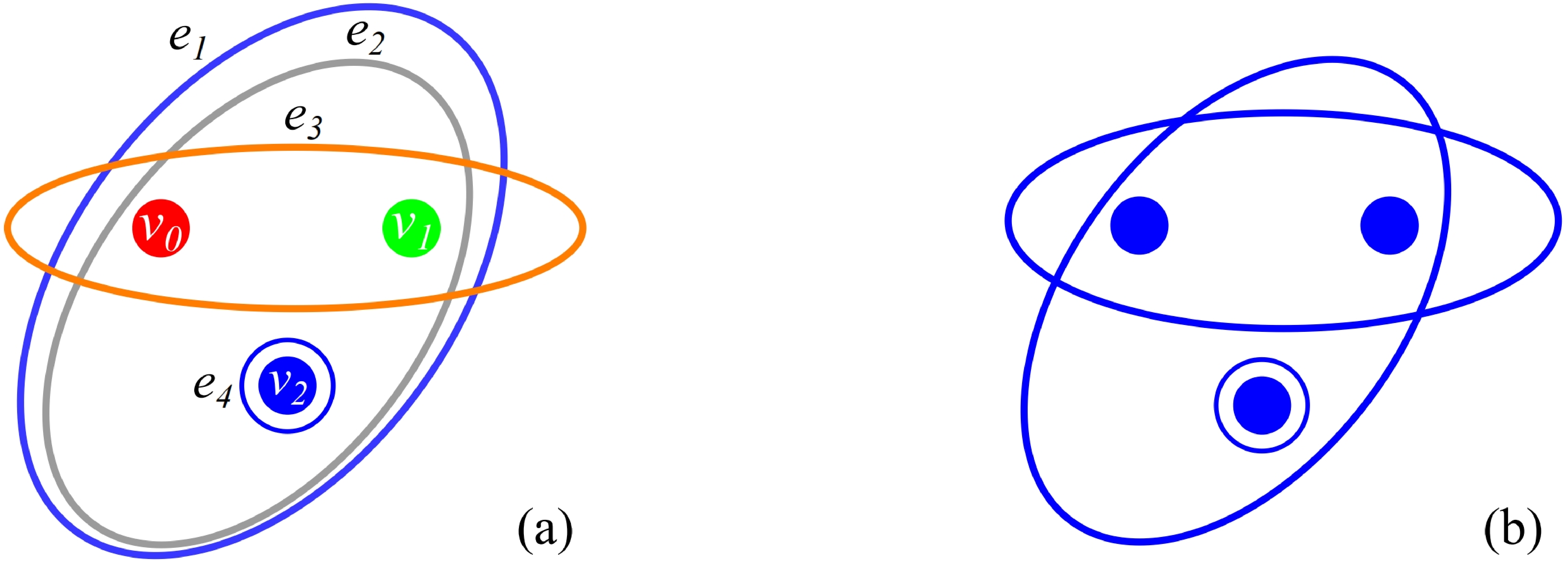}
	\caption{3-vertex multihypergraphs and 3-vertex hypergraphs. (a) shows a multihypergraph with four hyperedges $e_1$ to $e_4$, each with unique RGB values (55,55,255), (155,155,155), (255,125,0) and (0,0,255). (b) shows a hypergraph, where all vertices and the hyperedge uniformly display blue (0,0,255).}\label{fig4}
\end{figure}
\section{\label{sec:4}The relationship between GREWSs and multihypergraph states}
In 2013, Ross et al. \cite{rossi2013quantum} and Qu et al. \cite{qu2013encoding} introduced the concept of hypergraph states, highlighting their one-to-one correspondence with REWSs. Later, in 2018, Xiong et al. \cite{xiong2018qudit} in their research indicated that $d$-dimensional quantum hypergraph states are a subset of GREWSs. Noting that the count of GREWSs is $d^{d^N-1}$, while the number of $d$-dimensional hypergraph states amounts to $d^{2^N-1}$. In this section, firstly, we provide a rigorous proof of the one-to-one correspondence between $2$-dimensional hypergraph states and REWSs. Then, we establish the specific relationship between $d$-dimensional hypergraph states and GREWSs using a similar proof approach, identifying which GREWSs cannot be constructed by $d$-dimensional hypergraph states. Finally, we will elucidate and substantiate the relationship between the proposed multihypergraph states and GREWSs.
\subsection{\label{subsec:3.1}The relationship between 2-dimensional hypergraph states and REWSs}
In this subsection, we firstly state the theorem establishing the bijective relationship between the $2$-dimensional hypergraph state and the REWS. Subsequently, we offer a mathematical proof of this theorem. Consider an arbitrary REWS, denoted as 
\begin{equation}
	\left|f\right\rangle=\frac{1}{2^{N/2}}\sum_{i_0,\cdots,i_{N-1}=0}^{1}{\left(-1\right)^{f\left(i_0,\cdots,i_{N-1}\right)}\left|i_0,\cdots,i_{N-1}\right\rangle},
	\label{eq:12}
\end{equation}
where $f\left(i_0,\cdots{,i}_{N-1}\right)$ is a boolean function of $n$ independent variables $i_0,\cdots{,i}_{N-1} \in \mathbb{Z}_{2}$. By setting the coefficient of the $\left|0,\cdots,0\right\rangle$ term as the global phase and letting $f\left(0,\cdots,0\right)=0$, the total number of REWSs is deduced to be $2^{2^N-1}$. Notably, given that $i_0,\cdots{,i}_{N-1}$ are not simultaneously zero, we define the condition $i_{l_0}i_{l_1},\cdots,i_{l_{t\prime-1}}>0$ to encompass all non-zero terms in $i_0,\cdots{,i}_{N-1}$, where $l_0<l_1<\cdots<l_{t\prime-1}\in\mathbb{Z}_N$, $t\prime\in\mathbb{Z}_{N+1}$.
\begin{theorem}\label{thm1}
	Ror any given hypergraph state $\scriptstyle\left|\widetilde{G}\right\rangle$ that corresponds to a hypergraph $\scriptstyle\widetilde{G}=\left(V,\widetilde E\right)$, it must be a REWS. Furthermore, every REWS $\left|f\right\rangle$ can be associated with a specific hypergraph state $\scriptstyle\left|\widetilde{G}\right\rangle$, such that $\left|f\right\rangle=\scriptstyle\left|\widetilde{G}\right\rangle$.
\end{theorem}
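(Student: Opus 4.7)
The plan is to prove both halves of the claim separately and close with a counting argument that upgrades each construction into a genuine bijection.

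First I would check the easy direction: every $\left|\widetilde G\right\rangle$ is a REWS. Because every $\widetilde{CZ}_e$ from Eq.~\eqref{eq:3} is diagonal in the computational basis and $\omega_2 = -1$, the product $\prod_{e \in \widetilde E}\widetilde{CZ}_e^{m_e}$ acts on the uniform superposition $\left|+_2\right\rangle^{\otimes N}$ by multiplying each basis ket $|i_0,\ldots,i_{N-1}\rangle$ by a sign $(-1)^{f_{\widetilde G}(i_0,\ldots,i_{N-1})}$, where
$$f_{\widetilde G}(i_0,\ldots,i_{N-1}) = \bigoplus_{e \in \widetilde E} m_e\, i_{v_0^e}\, i_{v_1^e}\cdots i_{v_{t_e-1}^e} \pmod{2}.$$
Since every summand contains at least one factor $i_j$, one has $f_{\widetilde G}(0,\ldots,0)=0$, matching the global-phase convention adopted above Eq.~\eqref{eq:12}, so $\left|\widetilde G\right\rangle$ is of the required form.

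For the converse, I would invoke the algebraic normal form (ANF) of boolean functions: any $f:\mathbb{Z}_2^N\to\mathbb{Z}_2$ with $f(0,\ldots,0)=0$ admits a unique expansion
$$f(i_0,\ldots,i_{N-1}) = \bigoplus_{\emptyset\neq S\subseteq\mathbb{Z}_N} c_S \prod_{j\in S} i_j, \qquad c_S \in \{0,1\},$$
because the $2^N-1$ monomials $\prod_{j\in S} i_j$ form an $\mathbb{F}_2$-basis of the space of boolean functions vanishing at the origin. Given an arbitrary REWS $\left|f\right\rangle$, I would read off its ANF coefficients and, for every nonempty $S$ with $c_S=1$, include a single hyperedge connecting exactly the vertices of $S$ with weight $m_e=1$. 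The forward computation then shows that the resulting hypergraph state carries exactly the phases $(-1)^{f(i_0,\ldots,i_{N-1})}$, so $\left|\widetilde G\right\rangle=\left|f\right\rangle$.

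To confirm the recipe yields a genuine bijection rather than merely a surjection in each direction, I would count both sides: there are $2^N-1$ candidate nonempty hyperedges, each appearing or not, giving $2^{2^N-1}$ hypergraph states, while the $2^N-1$ independent ANF coefficients likewise yield $2^{2^N-1}$ REWSs. The main obstacle I anticipate is making the ANF uniqueness argument tight and producing the inverse map cleanly: the forward phase bookkeeping is essentially a one-line calculation, but without uniqueness of the ANF decomposition, the assignment from $\left|f\right\rangle$ back to a distinguished hypergraph would be ambiguous and the cardinality-matching step would not close the bijection.
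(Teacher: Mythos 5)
Your proposal is correct, and it reaches the same destination as the paper by a route that is mathematically equivalent but packaged differently. The paper proves surjectivity by writing down the linear system Eq.~\eqref{eq:13} over $GF_2$ relating the edge multiplicities $\{m_e\}$ to the values $f(i_0,\ldots,i_{N-1})$, partitioning the equations by the Hamming weight of $(i_0,\ldots,i_{N-1})$, and showing recursively that after substituting the solutions of the first $n-1$ groups the $n$th group has identity coefficient matrix $I_{C_N^n\times C_N^n}$; this recursion is precisely the M\"obius-inversion proof that the algebraic normal form exists and is unique. You instead cite ANF existence and uniqueness as a known fact and read the hyperedge set directly off the coefficients $c_S$. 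What your route buys is brevity and an immediate proof of \emph{injectivity} (distinct hypergraphs give distinct REWSs, since the ANF is unique), which the paper's "one-to-one correspondence" language implicitly wants but its proof only delivers via the separate cardinality count $2^{2^N-1}$ on both sides; what the paper's route buys is self-containedness and, more importantly, a template that generalizes verbatim to the $d$-dimensional and multihypergraph settings of Theorem~\ref{thm2}, where one cannot simply quote the boolean ANF theorem and must instead analyze the rank of the block coefficient matrices in Eqs.~\eqref{eq:23} and~\eqref{eq:25}. Your closing worry about making ANF uniqueness "tight" is unfounded: the $2^N-1$ nonempty monomials $\prod_{j\in S} i_j$ are linearly independent over $\mathbb{F}_2$ (evaluate a purported dependence at the indicator vector of a minimal $S$ appearing in it), so the inverse map is unambiguous and the bijection closes.
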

\begin{proof}
Based on the definitions of hypergraph states and REWSs, it is obvious that a hypergraph state invariably constitutes a REWS \cite{rossi2013quantum,qu2013encoding}. Therefore, to substantiate \textbf{Theorem \ref{thm1}}, it suffices to demonstrate that each REWS corresponds to a hypergraph state. Consider a hypergraph state $\scriptstyle\left|\widetilde{G}\right\rangle=\left(\prod_{e\in \widetilde E}{{\widetilde{CZ}}_e}^{m_e}\right)\left|+_d\right\rangle^{\otimes N}$, constructed via $\scriptstyle\left\{m_{e\in \widetilde E}\right\}$ $\scriptstyle\left\{{\widetilde{CZ}}_{e\in \widetilde E}\right\}$ operations, such that $\left|f\right\rangle=\scriptstyle\left|\widetilde{G}\right\rangle$. Analyzing the coefficients $\scriptstyle\left\{\left(-1\right)^{f\left(i_0,\cdots{,i}_{N-1}\right)}\right\}$ of all superposition terms in $\left|f\right\rangle$ allows us to derive the ensuing equations
\begin{multline}
	\fontsize{8}{10}
	\Bigl\{ \sum_{\substack{\left\{ v_0,v_1,\cdots,v_{t-1}\right\} \in\wp(V)\backslash\mathrm{\Phi}, \\ \left\{v_0,v_1,\cdots,v_{t-1}\right\} \subseteq\left\{l_0,l_1,\cdots,l_{t\prime-1}\right\}}}m_{e\left(v_0,v_1,\cdots,v_{t-1}\right)}= \Bigr. \\[-1mm]
	\Bigl. f\left(i_0,\cdots{,i}_{N-1}\right) \bigm| t, t\prime\in\mathbb{Z}_{N+1}^\ast;i_0,\cdots{,i}_{N-1}\in\mathbb{Z}_2; \Bigr. \\[-2mm]
	\Bigl.\left(i_0,\cdots{,i}_{N-1}\right)\neq\left(0,\cdots,0\right) \Bigr\},
	\label{eq:13}
\end{multline}
which is a non-homogeneous linear system with $2^N-1$ independent variables and $2^N-1$ equations in the finite field ${GF}_2$. The equations, denoted as Eq. \eqref{eq:13}, are categorized based on the count of non-zero terms in $i_0,\cdots,i_{N-1}$. The $n$th($n\in\mathbb{Z}_{N+1}^\ast$) group of equations encompasses $C_N^n$ independent variables and $C_N^n$ linear equations. Within the $n$th subset, all non-zero elements among the independent variables $i_0,\cdots,i_{N-1}$ of $f\left(i_0,\cdots,i_{N-1}\right)$ are exclusively $i_{l_0}i_{l_1},\cdots,i_{l_{n-1}}$. The solvability of Eq. \eqref{eq:13} is contingent upon each subgroup within the $N$ sets of equations possessing a solution. Specifically, for $n=1$, the equation set is formulated as
\begin{multline}
	\fontsize{8}{10}
	\Bigl\{ m_{e\left(v_0\right)}=f\left(i_0,\cdots{,i}_{N-1}\right) \Bigm| v_0\in V,i_{v_0}\in\mathbb{Z}_2^\ast;  \\
	i_j=0,j\in V\backslash \left\{v_0\right\} \Bigr\},
	\label{eq:14}
\end{multline}
whose coefficient matrix is the identity matrix $I_{C_N^1\times C_N^1}$, indicating that the equations are solvable. Assuming that the first $n-1$ sets of equations yield solvable, the $n$th set can be simplified to
\begin{multline}
	\fontsize{8}{10}
	\Bigl\{ m_{e\left(v_0,v_1,\cdots,v_{n-1}\right)}=b_{v_0,v_1,\cdots,v_{n-1}} \bigm| n\in\mathbb{Z}_{N+1}^\ast;\\[-1mm] \left\{v_0,v_1,\cdots,v_{n-1}\right\} \in \wp(V)\backslash\mathrm{\Phi}; \Bigr.
	\Bigl.i_{v_0},\cdots,i_{v_{n-1}}\in\mathbb{Z}_2^\ast;\\[-1mm] i_j=0,j\in V\backslash\left\{v_0,v_1,\cdots,v_{n-1}\right\} \Bigr\},
	\label{eq:15}
\end{multline}
whose coefficient matrix is the identity matrix $I_{C_N^n\times C_N^n}$, where $b_{v_0,v_1,\cdots,v_{n-1}}$ represents the value on the right-hand side of each equation post-simplification. Consequently, if solutions exist for the initial $n-1$ sets of equations, then a solution is also assured for the $n$th set. This inference, coupled with the solvability of the first set of equations, enables a recursive determination that all the linear equations are solvable, implying that the Eq.  \eqref{eq:13} are resolvable. Therefore, any REWS $\left|f\right\rangle$ can be equated to a hypergraph state $\scriptstyle\left|\widetilde{G}\right\rangle$, constructed through a specified sequence of $\scriptstyle\left\{m_{e\in \widetilde E}\right\}$ $\scriptstyle\left\{{\widetilde{CZ}}_{e\in \widetilde E}\right\}$ operations, such that $\left|f\right\rangle$$=\scriptstyle\left|\widetilde{G}\right\rangle$. Given that a hypergraph state is inherently a REWS \cite{rossi2013quantum,qu2013encoding},  \textbf{Theorem \ref{thm1}} is substantiated.
\end{proof}
\subsection{The relationship between $d$-dimensional hypergraph states and GREWSs}
In this subsection, the relationship between GREWSs and $d$-dimensional hypergraph states is examined. It is also revealed that the reason why GREWSs are more numerous and the GREWSs which cannot be generated from $d$-dimensional hypergraph states. An arbitrary GREWS is defined as
\begin{equation}
	\left|f_d\right\rangle=\frac{1}{2^{n/2}}\sum_{i_0,\cdots,i_{N-1}=0}^{d-1}{{\omega_d}^{f\left(i_0,\cdots,i_{N-1}\right)}\left|i_0,\cdots,i_{N-1}\right\rangle},
	\label{eq:16}
\end{equation}
where, $f\left(i_0,\cdots,i_{N-1}\right)$ is an integer function of n independent variables $i_0,\cdots,i_{N-1}$ in the integer residual ring $\mathbb{Z}_d$. By isolating the $\left|0,\cdots,0\right\rangle$ term's coefficient as the global phase, setting $f\left(0,\cdots,0\right)=0$, the count of GREWSs is calculated as $d^{d^N-1}$. Consistent with the aforementioned approach, all non-zero terms $i_0,\cdots,i_{N-1}$ are collectively denoted as $i_{l_0},\cdots{,i}_{l_{t\prime-1}}$. Consider a $d$-dimensional hypergraph state $\scriptstyle\left|\widetilde{G}\right\rangle=\left(\prod_{e\in \widetilde E}{{\widetilde{CZ}}_e}^{m_e}\right)\left|+_d\right\rangle^{\otimes N}$, formulated through $\scriptstyle\left\{m_{e\in \widetilde E}\right\}$ $\scriptstyle\left\{{\widetilde{CZ}}_{e\in \widetilde E}\right\}$ operations, such that $\left|f_d\right\rangle$$=\scriptstyle\left|\widetilde{G}\right\rangle$. To analyze the coefficients $\scriptstyle\left\{{\omega_d}^{f\left(i_0,\cdots,i_{N-1}\right)}\right\}$ in all superposition terms of $\left|f_d\right\rangle$, equations
\begin{multline}
	\fontsize{8}{10}
	\Biggl\{ \sum_{\substack{\left\{ v_0,v_1,\cdots,v_{t-1}\right\} \in\wp(V)\backslash\mathrm{\Phi}, \\ \left\{v_0,v_1,\cdots,v_{t-1}\right\} \subseteq\left\{l_0,l_1,\cdots,l_{t\prime-1}\right\}}}\left(\prod_{j}^{t}i_{v_j}\right)m_{e\left(v_0,v_1,\cdots,v_{t-1}\right)}\\[-2mm] = \Biggr.
	\Biggl. f\left(i_0,\cdots{,i}_{N-1}\right) \biggm| t, t\prime\in\mathbb{Z}_{N+1}^\ast;i_0,\cdots{,i}_{N-1}\in\mathbb{Z}_d; \Biggr. \\[-3mm]
	\Biggl.\left(i_0,\cdots{,i}_{N-1}\right)\neq\left(0,\cdots,0\right) \Biggr\}
	\label{eq:17}
\end{multline}
are established, where the operations are conducted modulo $d$. Consequently, the aforementioned equations formulate a linear system with $2^N-1$ independent variables and $d^N-1$ equations in the integer residual ring $\mathbb{Z}_d$. A fundamental requirement for the solvability of a linear system is that the rank of the system's coefficient matrix must equal the rank of its augmented matrix. Thus, for the linear equations Eq. \eqref{eq:17} to be solvable, $\left\{f\left(i_0,\cdots,i_{N-1}\right)\right\}$ must adhere to a specific relational criterion. Given that the augmented matrix of Eq. \eqref{eq:17} has a maximum rank of $2^N-1$, $d$-dimensional hypergraph states can construct at most $d^{2^N-1}$ of the $d^{d^N-1}$ GREWSs. For instance, with $N=2$ and $d=3$, a linear equation system over the finite field ${GF}_3$ can be derived as
\begin{equation}
	\left\{
	\begin{array}{cccc}
		1\times m_{e\left(0\right)} &= f\left(1,0\right) \\
		2\times m_{e\left(0\right)} &= f\left(2,0\right) \\
		1\times m_{e\left(1\right)} &= f\left(0,1\right) \\
		2\times m_{e\left(1\right)} &= f\left(0,2\right) \\
		1\times m_{e\left(0\right)}+1\times m_{e\left(1\right)}+1\times1\times m_{e\left(0,1\right)} &= f\left(1,1\right) \\
		1\times m_{e\left(0\right)}+2\times m_{e\left(1\right)}+1\times2\times m_{e\left(0,1\right)} &= f\left(1,2\right)\\
		2\times m_{e\left(0\right)}+1\times m_{e\left(1\right)}+2\times1\times m_{e\left(0,1\right)} &= f\left(2,1\right) \\
		2\times m_{e\left(0\right)}+2\times m_{e\left(1\right)}+2\times2\times m_{e\left(0,1\right)} &\ = f\left(2,2\right)
	\end{array}
	\right.,
	\label{eq:18}
\end{equation}
which comprises $2^2-1=3$ independent variables and $3^2-1=8$ equations. The system is solvable if and only if $f\left(i_0,i_1\right)$ fulfills
\begin{equation}
	\left\{
	\begin{array}{ccc}
		f\left(1,0\right) &= 2f\left(2,0\right) \\
		f\left(0,1\right) &= 2f\left(0,2\right) \\
		2f\left(1,1\right)+2f\left(1,0\right) &= \ f\left(1,2\right) \\
		2f\left(2,2\right)+f\left(1,0\right) &= \ f\left(2,1\right)
	\end{array}
	\right..
	\label{eq:19}
\end{equation}
The probability amplitude of the following $2$-particle, $3$-dimensional GREWSs $\left|f_3\right\rangle$ fails to conform to the aforementioned equation. Consequently, it cannot be realized via the $d$-dimensional hypergraph state
\begin{multline}
	\fontsize{8}{10}
	\left|f_3\right\rangle=\frac{1}{2}\Bigl[\left|00\right\rangle+\left(e^{{2\pi \mathbf{i}}/{3}}\right)^1\left|01\right\rangle+\left|02\right\rangle+\left(e^{{2\pi \mathbf{i}}/{3}}\right)^1\left|10\right\rangle+\\ \Bigr.
	\Bigl.\left(e^{{2\pi \mathbf{i}}/{3}}\right)^1\left|11\right\rangle+\left|12\right\rangle+\left|20\right\rangle+\left(e^{{2\pi \mathbf{i}}/{3}}\right)^1\left|21\right\rangle+\left|22\right\rangle\Bigr].
	\label{eq:20}
\end{multline}
This study acknowledges that the resolution of Eq. \eqref{eq:17} necessitates consideration of the variable $d$. Specifically, when $d$ equals a prime number $q$, the solution of Eq. \eqref{eq:17} can be determined with relative ease within the finite field ${GF}_q$. Conversely, if $d$ is a composite number, the solution becomes more complex within the integer residual ring $\mathbb{Z}_d$. This intricacy will be elaborated upon in the subsequent subsection. Furthermore, this analysis has led to a formalization of the GREWSs that are unattainable through the construction via $d$-dimensional hypergraph states. This limitation stems from the discrepancy in the linear system associated with the hyperedge count, which consists of $d^N-1$ equations, yet is constrained by only $2^N-1$ independent variables.
\subsection{The relationship between multihypergraph states and GREWSs}
In the preceding subsection, the GREWSs was defined and reasons were delineated for the inability of $d$-dimensional hypergraph states to construct all GREWSs. In this subsection, the linear equations is developed to substantiate the relationship between the proposed multihypergraph states and GREWSs.
\begin{theorem}\label{thm2}
	For any given multihypergraph state $\scriptstyle\left|\widehat{\widetilde{G}}\right\rangle=\left(\prod_{e\in \widehat{\widetilde{E}}}{{\widehat{\widetilde{CZ}}}_e}^{m_e}\right)\left|+\right\rangle^{\otimes N}$ that corresponds to a multihypergraph $\scriptstyle\widehat{\widetilde{G}}=(V,\widehat{\widetilde E})$, It must be a GREWS. Furthermore, For prime $d=q$, every GREWS $\left|f_d\right\rangle$ can be associated with a specific hypergraph state $\scriptstyle\left|\widehat{\widetilde{G}}\right\rangle$, such that $\left|f_d\right\rangle=\scriptstyle\left|\widehat{\widetilde{G}}\right\rangle$.
\end{theorem}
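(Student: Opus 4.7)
The plan is to adapt the strategy of \textbf{Theorem \ref{thm1}} to general prime dimension $d=q$, replacing the boolean linear system over $GF_2$ by a linear system over $\mathbb{F}_q$. The forward direction is essentially immediate from Eq.~\eqref{eq:9} and Eq.~\eqref{eq:10}: each ${\widehat{\widetilde{CZ}}}_e^{m_e}$ is diagonal in the computational basis and multiplies $|i_0,\ldots,i_{N-1}\rangle$ by $\omega_d^{m_e \prod_j i_{v_j}^{s_{v_j}}}$, so composing the full product and acting on $|+_d\rangle^{\otimes N}$ produces an equal-amplitude superposition whose phases take the form $\omega_d^{g(i)}$ with $g(i)=\sum_{e} m_e \prod_j i_{v_j}^{s_{v_j}} \bmod d$. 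Fixing $g(0,\ldots,0)=0$ as the global phase then matches the GREWS form of Eq.~\eqref{eq:16}.

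For the converse, given a target GREWS $|f_d\rangle$, I would match the phase of each nonzero basis state to obtain the linear system
\[
\sum_{\substack{e = e(v_0,\ldots,v_{t-1},s) \in \widehat{\widetilde{E}} \\ \{v_0,\ldots,v_{t-1}\}\subseteq\{l_0,\ldots,l_{t'-1}\}}} m_e \prod_{j=0}^{t-1} i_{v_j}^{s_{v_j}} \equiv f(i_0,\ldots,i_{N-1}) \pmod{q}
\]
indexed by the nonzero tuples $(i_0,\ldots,i_{N-1})$ with support $\{l_0,\ldots,l_{t'-1}\}$. A short count $\sum_{t=1}^{N}\binom{N}{t}(q-1)^t = q^N-1$ of the hyperedges shows that the system is square, and since $i_j^{s_j}=0$ whenever $i_j=0$, only hyperedges whose vertex set lies inside the support of $i$ contribute to its equation. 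These two observations let me run the same induction on support size used in \textbf{Theorem \ref{thm1}}: assuming all $m_e$ whose hyperedge support has size $<n$ are already determined, for each fixed $L=\{l_0,\ldots,l_{n-1}\}$ of size $n$ the remaining $(q-1)^n$ unknowns $m_{e(L,s)}$, $s\in(\mathbb{Z}_q^{\ast})^n$, satisfy a square block system with coefficient matrix $M_L[i',s]=\prod_{j=0}^{n-1}(i'_j)^{s_j}$, $i'\in(\mathbb{Z}_q^{\ast})^n$.

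The main obstacle is proving that $M_L$ is invertible over $\mathbb{F}_q$, and this is precisely where primality of $d$ enters. The plan is to recognize $M_L$ as the $n$-fold Kronecker power $M_1^{\otimes n}$ of the $(q-1)\times(q-1)$ matrix $M_1[a,b]=a^b$ on $a,b\in\{1,\ldots,q-1\}$, and then factor $M_1 = \mathrm{diag}(1,2,\ldots,q-1)\cdot V$, where $V[a,c]=a^{c-1}$ is Vandermonde on the distinct nonzero elements of $\mathbb{F}_q$. Both factors have nonzero determinant in $\mathbb{F}_q$: $\det\mathrm{diag}(1,\ldots,q-1)=(q-1)!$ and $\det V=\prod_{1\le a<a'\le q-1}(a'-a)$, both nonzero because $\mathbb{F}_q$ is a field. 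Tensor-product multiplicativity of determinants then gives $\det M_L\neq 0$, the inductive step succeeds, and the $m_e$'s realizing $|f_d\rangle$ as a multihypergraph state can be read off. Combined with the forward direction, this establishes the bijection for prime $d$; notably, the Vandermonde step can pick up zero divisors modulo composite $d$, foreshadowing why surjectivity is expected to fail in that regime, consistent with the discussion elsewhere in the paper.
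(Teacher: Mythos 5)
Your proposal is correct and follows essentially the same route as the paper: the same square linear system over $GF_q$ indexed by nonzero tuples, the same induction on the size of the support of $(i_0,\ldots,i_{N-1})$, and the same Kronecker-power block structure as in Eqs.~\eqref{eq:23} and \eqref{eq:25}. The one place you go beyond the paper is the full-rank claim for the $(q-1)\times(q-1)$ block $M_1[a,b]=a^b$, which the paper simply asserts; your factorization $M_1=\mathrm{diag}(1,\ldots,q-1)\cdot V$ with $V$ Vandermonde on the distinct nonzero elements of $\mathbb{F}_q$ is a clean justification of that step and correctly isolates where primality of $d$ is used.
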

\begin{proof}
	Based on the definitions of multihypergraph states and GREWSs, it is obvious that a multihypergraph state invariably constitutes a GREWS. Therefore, to substantiate \textbf{Theorem \ref{thm2}}, it suffices to demonstrate that each GREWS corresponds to a multihypergraph state. Consider a multihypergraph state $\scriptstyle\left|\widehat{\widetilde{G}}\right\rangle=\left(\prod_{e\in\widehat{\widetilde{E}} }{{\widehat{\widetilde{CZ}}}_e}^{m_e}\right)\left|+_d\right\rangle^{\otimes N}$, constructed through $\scriptstyle\left\{m_{e\in \widehat{\widetilde E}}\right\}$ $\scriptstyle\left\{{\widetilde{CZ}}_{e\in \widehat{\widetilde E}}\right\}$  operations, such that $\left|f_d\right\rangle$$=\scriptstyle\left|\widehat{\widetilde{G}}\right\rangle$. To analyze the coefficients $\left\{{\omega_d}^{f\left(i_0,\cdots,i_{N-1}\right)}\right\}$ of all superposition terms in $\left|f_d\right\rangle$, we formulate the system of equations
	\begin{multline}
		\fontsize{8}{10}
		{\begin{aligned}
		\Biggl\{ \sum_{\substack{\left\{ v_0,v_1,\cdots,v_{t-1}\right\} \in\wp(V)\backslash\mathrm{\Phi}, \\ \left\{v_0,v_1,\cdots,v_{t-1}\right\} \subseteq\left\{l_0,l_1,\cdots,l_{t\prime-1}\right\}}}
		\sum_{s_0,\cdots,s_{t-1}=1}^{d-1} \left(\prod_{j=0}^{t-1}{i_{v_j}}^{s_j}\right) \\[-1mm] \cdot m_{e\left(v_0,v_1,\cdots,v_{t-1},s\right)} = f\left(i_0,\cdots{,i}_{N-1}\right) 
		\biggm| t, t\prime\in\mathbb{Z}_{N+1}^\ast; i_0,\\[-2mm] \cdots{,i}_{N-1}\in\mathbb{Z}_d; \left(i_0,\cdots{,i}_{N-1}\right)\neq\left(0,\cdots,0\right) \Biggr\},
		\end{aligned}}
		\label{eq:21}
	\end{multline}
which is a non-homogeneous linear system with $d^N-1$ independent variables and $d^N-1$ equations in the finite field ${GF}_q$. Consistent with the aforementioned approach, all non-zero terms $i_0,\cdots,i_{N-1}$ are collectively denoted as $i_{l_0},\cdots{,i}_{l_{t\prime-1}}$. The system's equations are segregated into $N$ groups based on the count of non-zero terms in $i_0,\cdots,i_{N-1}$. Each group, denoted as $n$, comprises $C_N^n\cdot\left(d-1\right)^n$ independent variables and $C_N^n\cdot\left(d-1\right)^n$ equations. Specifically, in the $n$th set of equations, the independent variables of $f\left(i_0,\cdots,i_{N-1}\right)$, $i_0,\cdots,i_{N-1}$, with $n$ non-zero values, are represented as $i_{l_0},i_{l_1},\cdots,i_{l_{n-1}}$. Eq. \eqref{eq:21} is solvable if each of the $N$ groups of equations has a solution. For the case $n=1$, the initial set of equations is
	\begin{multline}
		\fontsize{8}{10}
		\Biggl\{ \sum_{s_{v_0}=1}^{d-1}{{i_{v_0}}^{s_{v_0}}\cdot m_{e\left(v_0,s\right)}}=f\left(i_0,\cdots{,i}_{N-1}\right) \Bigm| v_0\in V,\\[-4mm] i_{v_0}\in\mathbb{Z}_d^\ast;  
		i_j=0,j\in V\backslash \left\{v_0\right\} \Bigr\},
		\label{eq:22}
	\end{multline}
	whose coefficient matrix is
	\begin{equation}
		A = 
		I_{C_N^1\times C_N^1}
		\otimes
		\begin{bmatrix}
			1 & 1 & \cdots & 1 \\
			2 & 2^2 & \cdots & 2^{d-1} \\
			\vdots & \vdots & \ddots & \vdots \\
			(d-1) & (d-1)^2 & \cdots & (d-1)^{d-1}
		\end{bmatrix},
		\label{eq:23}
	\end{equation}
	 and the coefficient matrix possesses full rank, $rank(A)=N\times\left(d-1\right)$, ensuring the solvability of these equations. If the first $n-1$ sets of equations can be solvable, the $n$th set can be simplified to 
	\begin{multline}
		\fontsize{8}{10}
		{\begin{aligned}
				\Biggl\{\sum_{s_0,\cdots,s_{n-1}=1}^{d-1} \left(\prod_{j=0}^{n-1}{i_{v_j}}^{s_j}\right) \cdot m_{e\left(v_0,v_1,\cdots,v_{n-1},s\right)} \cdot f\left(i_0, \right. \\[-1mm] i_1,\cdots  \left. {,i}_{N-1}\right)\biggm| n\in\mathbb{Z}_{N+1}^\ast;\left\{v_0,v_1,\cdots,v_{n-1}\right\}\in\wp(V)\backslash \mathrm{\Phi}; \\[-2mm] i_{v_0}, \cdots,i_{v_{n-1}}\in\mathbb{Z}_d^\ast; i_j=0,j\in V\backslash \left\{v_0,v_1,\cdots,v_{n-1}\right\} \Biggr\},
		\end{aligned}}
		\label{eq:24}
	\end{multline}
	whose coefficient matrix can be derived as
	\begin{equation}
		\small
		B = 
		I_{C_N^n\times C_N^n}
		\otimes
		\begin{bmatrix}
			1 & 1 & \cdots & 1 \\
			2 & 2^2 & \cdots & 2^{d-1} \\
			\vdots & \vdots & \ddots & \vdots \\
			(d-1) & (d-1)^2 & \cdots & (d-1)^{d-1}
		\end{bmatrix}^{\otimes n},
		\label{eq:25}
	\end{equation}
	where $b_{v_0,\cdots,v_{n-1},i_{v_0},\cdots i_{v_{n-1}}}$ represents the simplified value to the right of each equation(referred to as $b_{v_0,v_1,\cdots,v_{n-1}}$ in the proof presented in subsection \ref{subsec:3.1}, owing to all non-zero terms in $\mathbb{Z}_d$ being 1).	The coefficient matrix exhibits full rank, $rank(B)=C_N^n\times\left(d-1\right)^n$. Consequently, if solutions exist for the first $n-1$ sets of equations, then the $n$th set also possesses a solution. This, coupled with the solvability of the first set, implies a recursive solution pattern for all linear equations, establishing that Eq. \eqref{eq:21} is solvable. Hence, when $d$ is prime, every GREWS $\left|f_d\right\rangle$ corresponds to a specific multihypergraph state $\scriptstyle\left|\widehat{\widetilde{G}}\right\rangle$, satisfying $\left|f_d\right\rangle=\scriptstyle\left|\widehat{\widetilde{G}}\right\rangle$. Given that a multihypergraph state is inherently a GREWS. This confirms the validity of \textbf{Theorem \ref{thm2}}. 
\end{proof}
In the previous subsection, it was established that the 2-particle, 3-dimensional GREWS $\left|f_3\right\rangle$ cannot be constructed using a $d$-dimensional hypergraph state. However, employing the proposed 2-particle, 3-dimensional multihypergraph state enables the derivation of the following linear equations Eq.  \eqref{eq:26}
\begin{widetext}
		\begin{equation}
			\footnotesize
			\hspace{-6mm}
			\left\{
			\begin{array}{cccccccc}
				1^1\times m_{e\left(0,s=1\right)}+1^2\times m_{e\left(0,s=2\right)}&=f\left(1,0\right)=1 \\
				2^1\times m_{e\left(0,s=1\right)}+2^2\times m_{e\left(0,s=2\right)}&=f\left(2,0\right)=0 \\
				1^1\times m_{e\left(1,s=1\right)}+1^2\times m_{e\left(1,s=2\right)}&=f\left(0,1\right)=1 \\
				2^1\times m_{e\left(1,s=1\right)}+2^2\times m_{e\left(1,s=2\right)}&=f\left(0,2\right)=0 \\
				f\left(1,0\right)+f\left(0,1\right)+1^1\times1^1\times m_{e\left(0,1,s=(1,1)\right)}+1^1\times1^2\times m_{e\left(0,1,s=(1,2)\right)}+1^2\times1^1\times m_{e\left(0,1,s=(2,1)\right)}+1^2\times1^2\times m_{e\left(0,1,s=(2,2)\right)}&=f\left(1,1\right)=1 \\
				f\left(1,0\right)+f\left(0,2\right)+1^1\times2^1\times m_{e\left(0,1,s=(1,1)\right)}+1^1\times2^2\times m_{e\left(0,1,s=(1,2)\right)}+1^2\times2^1\times m_{e\left(0,1,s=(2,1)\right)}+1^2\times2^2\times m_{e\left(0,1,s=(2,2)\right)}&=f\left(1,2\right)=0\\
				f\left(2,0\right)+f\left(0,1\right)+2^1\times1^1\times m_{e\left(0,1,s=(1,1)\right)}+2^1\times1^2\times m_{e\left(0,1,s=(1,2)\right)}+2^2\times1^1\times m_{e\left(0,1,s=(2,1)\right)}+2^2\times1^2\times m_{e\left(0,1,s=(2,2)\right)}&=f\left(2,1\right)=1 \\
				f\left(2,0\right)+f\left(0,2\right)+2^1\times2^1\times m_{e\left(0,1,s=(1,1)\right)}+2^1\times2^2\times m_{e\left(0,1,s=(1,2)\right)}+2^2\times2^1\times m_{e\left(0,1,s=(2,1)\right)}+2^2\times2^2\times m_{e\left(0,1,s=(2,2)\right)}&=f\left(2,2\right)=0 \\
			\end{array}
			\right.,
			\label{eq:26}
		\end{equation}
\end{widetext}
Simplify and resolve
\begin{equation}
	\hspace{-4mm}
	\left\{
	\begin{array}{cccc}
		m_{e\left(0,s=1\right)}&=2 \\
		m_{e\left(0,s=2\right)}&=2 \\
		m_{e\left(1,s=1\right)}&=2 \\
		m_{e\left(1,s=2\right)}&=2 \\
		m_{e\left(0,1,s=(1,1)\right)}&=0 \\
		m_{e\left(0,1,s=(1,2)\right)}&=1\\
		m_{e\left(0,1,s=(2,1)\right)}&=0 \\
		m_{e\left(0,1,s=(2,2)\right)}&=1
	\end{array}
	\right..
	\label{eq:27}
\end{equation}
Consequently, by executing two $\scriptstyle\sum_{i_0,i_1=0}^{d-1}{\omega_d^{i_0}\left|i_0,i_1\right\rangle\left\langle i_0,i_1\right|}$ operations, two $\scriptstyle\sum_{i_0,i_1=0}^{d-1}{\omega_d^{{i_0}^2}\left|i_0,i_1\right\rangle\left\langle i_0,i_1\right|}$ operations, two $\scriptstyle\sum_{i_0,i_1=0}^{d-1}$${\omega_d^{i_1}\left|i_0,i_1\right\rangle\left\langle i_0,i_1\right|}$ operations, two $\scriptstyle\sum_{i_0,i_1=0}^{d-1}$$\scriptstyle\omega_d^{{i_1}^2}\left|i_0,i_1\right\rangle$\\$\scriptstyle\left\langle i_0,i_1\right|$ operations, one $\scriptstyle\sum_{i_0,i_1=0}^{d-1}{\omega_d^{i_0\times{i_1}^2}\left|i_0,i_1\right\rangle\left\langle i_0,i_1\right|}$ operation and one  $\scriptstyle\sum_{i_0,i_1=0}^{d-1}{\omega_d^{{i_0}^2\times{i_1}^2}\left|i_0,i_1\right\rangle\left\langle i_0,i_1\right|}$ operation on the $\scriptstyle{\left|+_3\right\rangle^{\otimes2}=3}^{-1}\sum_{i,j=0}^{2}\left|ij\right\rangle$ state, a two-particle, three-dimensional GREWS $\left|f_3\right\rangle$ can be constructed. This state, as indicated in the prior subsection, cannot be derived from a $d$-dimensional hypergraph state.

Next, we persist in examining the nexus between the GREWSs and the proposed multihypergraph states, in scenarios where $d$ is a composite. By selecting $d=4$ and $d=6$ as exemplary cases, we aim to derive the coefficient matrix pertinent to
\begin{equation}
	\small
	I_{C_N^n\times C_N^n}
	\otimes
	\begin{bmatrix}
		1 & 1  & 1 \\
		2 & 0 & 0 \\
		3 & 1 & 3
	\end{bmatrix}^{\otimes n}
	\text{and } 
	I_{C_N^n\times C_N^n}
	\otimes
	\begin{bmatrix}
		1 & 1 & 1 & 1 & 1 \\
		2 & 4 & 4 & 4 & 2 \\
		3 & 3 & 3 & 3 & 3 \\
		4 & 4 & 4 & 4 & 4 \\
		5 & 1 & 5 & 1 & 5
	\end{bmatrix}^{\otimes n}.
	\label{eq:28}
\end{equation}
In the case of $d=4$, the element $2$ lacks a multiplicative inverse within the integer residual ring $\mathbb{Z}_4$. Similarly, when $d=6$, the elements $2$, $3$ and $4$ are devoid of multiplicative inverses in the integer residual ring $\mathbb{Z}_6$. Under these conditions, the equation system denoted as Eq. \eqref{eq:24} inevitably encompasses a subset that is unsolvable, implying that certain aspects of the GREWSs cannot be realized through multihypergraph states. For instance, with $d=4$ and $N=1$, it is infeasible to construct the specified GREWSs by multihypergraph states, as the corresponding system of linear equations
\begin{multline}
	\left|f_4\right\rangle=\frac{1}{2}(\left|0\right\rangle+e^{{2\pi \mathbf{i}}/{4}}\left|1\right\rangle+e^{{2\pi \mathbf{i}}/{4}}\left|2\right\rangle+\left(e^{{2\pi \mathbf{i}}/{4}}\right)^2\left|3\right\rangle),\\[-2mm]
	\left\{
	\begin{array}{cccc}
		1\cdot m_{e\left(0,s=1\right)}+1\cdot m_{e\left(0,s=2\right)}+1\cdot m_{e\left(0,s=2\right)}&=1 \\
		2\cdot m_{e\left(0,s=1\right)}+0\cdot m_{e\left(0,s=2\right)}+0\cdot m_{e\left(0,s=2\right)}&=1 \\
		3\cdot m_{e\left(0,s=1\right)}+1\cdot m_{e\left(0,s=2\right)}+3\cdot m_{e\left(0,s=2\right)}&\ =2
	\end{array}
	\right.
	\label{eq:29}
\end{multline}
is insoluble. Based on the preceding analysis, it is discerned that the count of multihypergraph states is equivalent to the number of GREWSs, with both being $d^{d^{N-1}}$. Furthermore, it is intrinsic that multihypergraph states adhere to the GREWSs framework. Consequently, this necessitates that a portion of these multihypergraph states invariably culminates in the formation of identical GREWSs. Illustratively, by considering the scenario where $d=4$ and $N=1$, we observe
\begin{multline}
	\left|f_4\right\rangle=\frac{1}{2}(\left|0\right\rangle+e^{{2\pi \mathbf{i}}/{4}}\left|1\right\rangle+\left(e^{{2\pi \mathbf{i}}/{4}}\right)^2\left|2\right\rangle+e^{{2\pi \mathbf{i}}/{4}}\left|3\right\rangle),\\[-2mm]
	\left\{
	\begin{array}{cccc}
		1\cdot m_{e\left(0,s=1\right)}+1\cdot m_{e\left(0,s=2\right)}+1\cdot m_{e\left(0,s=2\right)}&=1 \\
		2\cdot m_{e\left(0,s=1\right)}+0\cdot m_{e\left(0,s=2\right)}+0\cdot m_{e\left(0,s=2\right)}&=2 \\
		3\cdot m_{e\left(0,s=1\right)}+1\cdot m_{e\left(0,s=2\right)}+3\cdot m_{e\left(0,s=2\right)}&=1 
	\end{array}
	\right..
	\label{eq:30}
\end{multline}
The tuple $\left(m_{e\left(1,s=1\right)},m_{e\left(1,s=2\right)},m_{e\left(1,s=3\right)}\right)$ admits two distinct solutions, $(1,3,1)$ and $(3,1,1)$, indicating that two separate multihypergraph states correspond to an identical GREWS. This observation leads to the inference that when $d$ is a composite number, the state of the multigraph is effectively a subset of the GREWS, with certain multigraph states converging upon the same GREWS. The relationship between the proposed multihypergraph states and the GREWSs is visually represented in Fig.\ref{fig5}.
\begin{figure}[!h]%
	\includegraphics[width=0.45\textwidth]{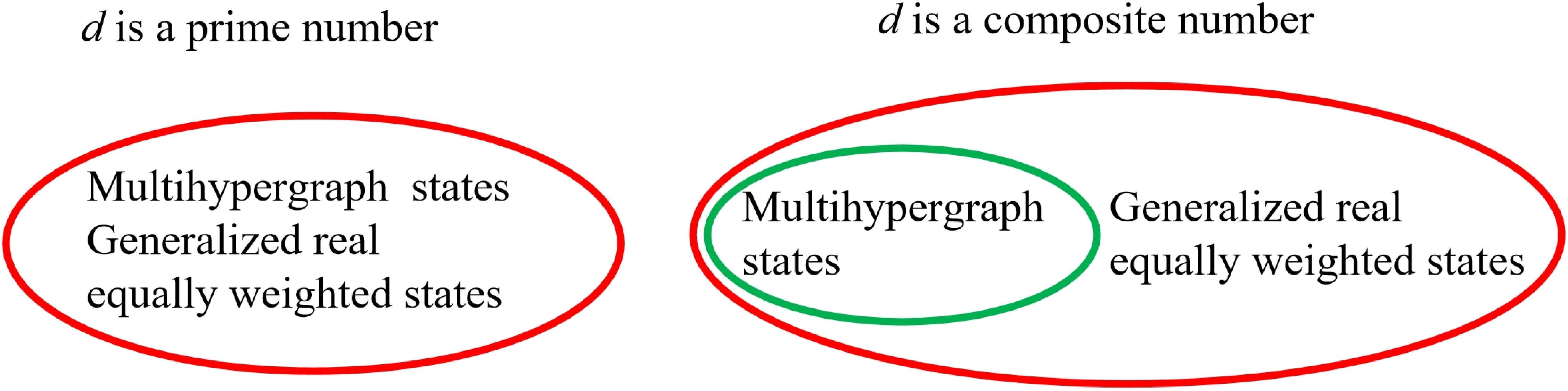}
	\caption{The relationship between the proposed multihypergraph states and the GREWSs.}\label{fig5}
\end{figure}
\section{\label{sec:5}Conclusion}
In this paper, we propose two classes of multiparticle entangled states, the multigraph states and the multihypergraph states, each corresponding to their respective constructs in graph theory. We employ the method of linear equation construction and solution to establish the one-to-one correspondence between the two-dimensional hypergraph states and the REWSs. With the same method, the one-to-one correspondence between the proposed multihypergraph states and the GREWSs, for prime $d$, is also established. Additionally, we highlight the limitations of GREWSs that cannot be generated from $d$-dimensional hypergraph states. The quantum states, we proposed, are constructed by a series of quantum gates, some derived from existing gate constructions. For broader cases, further investigation is anticipated. Given the extensive use of existing graph states and hypergraph states in quantum information and computing, our work suggests that the properties of multigraph states and multihypergraph states may surpass those of traditional graph and hypergraph states. Consequently, the potential application of the proposed multigraph states and multihypergraph states in corresponding fields merits further exploration.

\begin{acknowledgments}
This work was supported by National Natural Science Foundation of China (Grants No. 62171131, 61976053, and 61772134), Fujian Province Natural Science Foundation (Grant No. 2022J01186 and 2023J01533), and Program for New Century Excellent Talents in Fujian Province University.
\end{acknowledgments}
\appendix
\section{\label{ap:A}DERIVATION OF EQ.\eqref{eq:5}}
\begin{proof}
	
If $k\notin e$, 
$\scriptstyle {{\widetilde{CZ}}_e}^{m_e}X_k{{\widetilde{CZ}}_e}^{d-m_e}=X_k$.

If $k\in e$,

$\scriptstyle {{\widetilde{CZ}}_e}^{m_e}X_k{{\widetilde{CZ}}_e}^{d-m_e}=X_k{X_k}^{d-1}{{\widetilde{CZ}}_e}^{m_e}X_k{{\widetilde{CZ}}_e}^{d-m_e}$.

Since
\begin{multline}
	\footnotesize 
	\begin{aligned}
	{\widetilde{CZ}}_{e\left(v_0,v_1,\cdots,v_{t-1}\right)}=\sum_{i_0,i_1,{\cdots,i}_{N-1}=0}^{d-1}\Bigl(\omega_d^{i_{v_0}\times i_{v_1}\times\cdots\times i_{v_{t-1}}}\Bigr. \\[-1.5mm] \Bigl.\cdot\left|i_0,i_1,{\cdots,i}_{N-1}\right\rangle\left\langle i_0,i_1,{\cdots,i}_{N-1}\right|\Bigr),
	\end{aligned}
	\label{eq:A1}
\end{multline}
\begin{widetext}
	\begin{multline}
		\footnotesize 
		\begin{aligned}
		&{X_k}^{d-1}{{{\widetilde{CZ}}_e}^{m_e}X}_k\\[-2mm] &={X_k}^{d-1}\sum_{i_0,i_1,{\cdots,i}_{N-1}=0}^{d-1}\left(\omega_d\right)^{i_{v_0}\times\cdots\times i_{v_{t-1}}\times m_e}\cdot\left|i_0,i_1,\cdots, i_{N-1}\right\rangle\left\langle i_0,i_1,{\cdots,i}_{N-1}\right|X_k\\[-2mm] &=\sum_{i_0,i_1,{\cdots,i}_{N-1}=0}^{d-1}\left(\omega_d\right)^{i_{v_0}\times i_{v_1}\times\cdots\times i_{v_{t-1}}\times m_e}\left|i_0,{\cdots,i_{k-1},i_k+d-1,i_{k+1},\cdots,i}_{N-1}\right\rangle\left\langle i_0,{\cdots,i_{k-1},i_k+1,i_{k+1},\cdots,i}_{N-1}\right|,
		\end{aligned}
		\label{eq:A2}
	\end{multline}
	\begin{multline}
		\footnotesize 
		\begin{aligned}
			&{X_k}^{d-1}{{\widetilde{CZ}}_e}^{m_e}X_k{{\widetilde{CZ}}_e}^{{d-m}_e}\\
			&=\sum_{i_0,i_1,{\cdots,i}_{N-1}=0}^{d-1}\left(\omega_d\right)^{i_{v_0}\times i_{v_1}\times\cdots\times i_{v_{t-1}}\times m_e}\left| i_0,i_1,{\cdots,i_{k-1},i_k+d-1,i_{k+1},\cdots,i}_{N-1}\right\rangle\left\langle i_0,i_1,{\cdots,i_{k-1},i_k-1,i_{k+1},\cdots,i}_{N-1}\right| \\[-3mm]  &\ \ \ \ \ \ \ \ \ \ \ \ \ \ \ \cdot\sum_{{i\prime}_0,{i\prime}_1,{\cdots,i\prime}_{N-1}=0}^{d-1}{\left(\omega_d\right)^{{i\prime}_{v_0}\times{i\prime}_{v_1}\times\cdots\times{i\prime}_{v_{t-1}}({d-m}_e)}\left|{i\prime}_0,{i\prime}_1,{\cdots,i\prime}_{N-1}\right\rangle\left\langle{i\prime}_0,{i\prime}_1,{\cdots,i\prime}_{N-1}\right|}\\[-2mm] &=\sum_{i_0,i_1,{\cdots,i}_{N-1}=0}^{d-1}{\left(\omega_d\right)^{i_{v_0}\times i_{v_1}\times\cdots\times(i_{v_r}+d-1)\times\cdots\times i_{v_{t-1}}\times m_e+i_{v_0}\times i_{v_1}\times\cdots\times i_{v_{t-1}}({d-m}_e)}\left|i_0,i_1,{\cdots,i}_{N-1}\right\rangle\left\langle i_0,i_1,{\cdots,i}_{N-1}\right|}\\[-2mm]
			&=\sum_{i_0,i_1,{\cdots,i}_{N-1}=0}^{d-1}{\left(\omega_d\right)^{i_{v_0}\times i_{v_1}\times\cdots\times(i_{v_r}+d-1-i_{v_r})\times\cdots\times i_{v_{t-1}}\times m_e+i_{v_0}\times i_{v_1}\times\cdots\times i_{v_{t-1}}\times d}\left|i_0,i_1,{\cdots,i}_{N-1}\right\rangle\left\langle i_0,i_1,{\cdots,i}_{N-1}\right|}\\[-2mm]
			&=\sum_{i_0,i_1,{\cdots,i}_{N-1}=0}^{d-1}{\left(\omega_d\right)^{i_{v_0}\times i_{v_1}\times\cdots\times i_{v_{r-1}}\times(d-1){\times i}_{v_{r+1}}\times\cdots\times i_{v_{t-1}}\times m_e}\left|i_0,i_1,{\cdots,i}_{N-1}\right\rangle\left\langle i_0,i_1,{\cdots,i}_{N-1}\right|}={{\widetilde{CZ}}_{e\backslash \left\{k\right\}}}^{m_e(d-1)},v_r=k.
		\end{aligned}
		\label{eq:A3}
	\end{multline}
\end{widetext}
	Then we have $\scriptstyle X_k{X_k}^{d-1}{{\widetilde{CZ}}_e}^{m_e}X_k{{\widetilde{CZ}}_e}^{d-m_e}=X_k{{\widetilde{CZ}}_{e\backslash\left\{k\right\}}}^{m_e(d-1)}$ and\\ $\scriptstyle g_k=\left(\prod_{e\in \widetilde E}{{\widetilde{CZ}}_e}^{m_e}\right)X_k\left(\prod_{e\prime\in \widetilde E}{{\widetilde{CZ}}_{e\prime}}^{{d-m}_{e\prime}}\right)=X_k\prod_{k\in e}\scriptstyle{{\widetilde{CZ}}_{e\backslash \left\{k\right\}}}^{m_e(d-1)}$.\\$\ $
\end{proof}
\section{\label{ap:B}DERIVATION OF EQ.\eqref{eq:11}}
\begin{proof}
If $k\notin e$, 
$\scriptstyle {{\widehat{\widetilde{CZ}}}_e}^{m_e}X_k{{\widehat{\widetilde{CZ}}}_e}^{{d-m}_e}=X_k$.

If $k\in e$,

$\scriptstyle {{\widehat{\widetilde{CZ}}}_e}^{m_e}X_k{{\widehat{\widetilde{CZ}}}_e}^{{d-m}_e}=X_k{X_k}^{d-1}{{\widehat{\widetilde{CZ}}}_e}^{m_e}X_k{{\widehat{\widetilde{CZ}}}_e}^{{d-m}_e}$.

Since
\begin{multline}
	\footnotesize 
	\begin{aligned}
		{\widehat{\widetilde{CZ}}}_e=\sum_{i_0,i_1,{\cdots,i}_{N-1}=0}^{d-1}\Bigl( \omega_d^{{i_{v_0}}^{s_{v_0}}\times{i_{v_1}}^{s_{v_1}}\times{\cdots\times i_{v_{t-1}}}^{s_{v_{t-1}}}} \\[-2mm]  \cdot\left|i_0,i_1,{\cdots,i}_{N-1}\right\rangle\left\langle i_0,i_1,\cdots,i_{N-1}\right|\Bigr),
	\end{aligned}
	\label{eq:B1}
\end{multline}
\begin{widetext}
	\begin{multline}
		 \tiny
		\begin{aligned}
		&{X_k}^{d-1}{{\widehat{\widetilde{CZ}}}_e}^{m_e}X_k=\sum_{i_0,i_1,{\cdots,i}_{N-1}=0}^{d-1}\Bigl( \omega_d^{m_e\times{i_{v_0}}^{s_{v_0}}\times{i_{v_1}}^{s_{v_1}}\times{\cdots\times i_{v_{t-1}}}^{s_{v_{t-1}}}}\Bigr. \\[-4mm] \Bigl. &\ \ \ \ \ \ \ \ \ \ \ \ \ \ \ \ \ \ \ \ \ \ \ \ \ \ \ \ \ \ \ \ \ \ \ \ \ \ \ \ \ \ \ \ \ \ \ \ \ \ \ \ \ \ \ \ \ \ \ \ \ \ \ \ \cdot\left|i_0,i_1,{\cdots,i_{k-1},i_k+d-1,i_{k+1},\cdots,i}_{N-1}\right\rangle\left\langle i_0,i_1,{\cdots,i_{k-1},i_k-1,i_{k+1},\cdots,i}_{N-1}\right|\Bigr),
		\end{aligned}
		\label{eq:B2}
	\end{multline}
	\begin{multline}
		\tiny 
		\begin{aligned}
			&{X_k}^{d-1}{{\widehat{\widetilde{CZ}}}_e}^{m_e}X_k{{\widehat{\widetilde{CZ}}}_e}^{{d-m}_e}
			\\ &=\sum_{i_0,i_1,{\cdots,i}_{N-1}=0}^{d-1}\omega_d^{m_e\times{i_{v_0}}^{s_{v_0}}\times{i_{v_1}}^{s_{v_1}}\times{\cdots\times i_{v_{t-1}}}^{s_{v_{t-1}}}} \left|i_0,i_1,{\cdots,i_{k-1},i_k+d-1,i_{k+1},\cdots,i}_{N-1}\right\rangle \left\langle i_0,i_1,\cdots,i_{k-1},i_k-1,\right.\\[-4mm] &\left. \ \ \ \ \ \ \ \ \ i_{k+1}, \cdots,i_{N-1}\right|\cdot\sum_{{i\prime}_0,{i\prime}_1,{\cdots,i\prime}_{N-1}=0}^{d-1}{\left(\omega_d\right)^{(d-m_e)\times{{i\prime}_{v_0}}^{s_{v_0}}\times{{i\prime}_{v_1}}^{s_{v_1}}\times{\cdots\times{i\prime}_{v_{t-1}}}^{s_{v_{t-1}}}}\left|{i\prime}_0,{i\prime}_1,{\cdots,i\prime}_{N-1}\right\rangle\left\langle{i\prime}_0,{i\prime}_1,{\cdots,i\prime}_{N-1}\right|}\\[-2mm]
			&=\sum_{i_0,i_1,{\cdots,i}_{N-1}=0}^{d-1}{\left(\omega_d\right)^{m_e\times{i_{v_0}}^{s_{v_0}}\times\cdots\times({i_{v_r}}^{s_{v_r}}+d-1)\times\cdots\times{i_{v_{t-1}}}^{s_{v_{t-1}}}+(d-m_e)\times{i_{v_0}}^{s_{v_0}}\times\cdots\times{i_{v_{t-1}}}^{s_{v_{t-1}}}}\left|i_0,{\cdots,i}_{N-1}\right\rangle\left\langle i_0,{\cdots,i}_{N-1}\right|}\\[-2mm]
			&=\sum_{i_0,i_1,{\cdots,i}_{N-1}=0}^{d-1}{\left(\omega_d\right)^{m_e\times{i_{v_0}}^{s_{v_0}}\times\cdots\times({i_{v_r}}^{s_{v_r}}+d-1-{i_{v_r}}^{s_{v_r}})\times\cdots\times{i_{v_{t-1}}}^{s_{v_{t-1}}}+d\times{i_{v_0}}^{s_{v_0}}\times\cdots\times{i_{v_{t-1}}}^{s_{v_{t-1}}}}\left|i_0,{\cdots,i}_{N-1}\right\rangle\left\langle i_0,{\cdots,i}_{N-1}\right|}\\[-2mm]
			&=\sum_{i_0,i_1,{\cdots,i}_{N-1}=0}^{d-1}{\left(\omega_d\right)^{m_{e_c}\times{i_{v_0}}^{s_{v_0}}\times\cdots\times{i_{v_{r-1}}}^{s_{v_{r-1}}}\times(d-1)\times{i_{v_{r+1}}}^{s_{v_{r+1}}}\times\cdots\times{i_{v_{t-1}}}^{s_{v_{t-1}}}}\left|i_0,i_1,{\cdots,i}_{N-1}\right\rangle\left\langle i_0,i_1,{\cdots,i}_{N-1}\right|},v_r=k.
		\end{aligned}
			\label{eq:B3}
	\end{multline}
\end{widetext}
	Then we get $\scriptstyle X_k{X_k}^{d-1}{{\widehat{\widetilde{CZ}}}_e}^{m_e}X_k{{\widehat{\widetilde{CZ}}}_e}^{{d-m}_e}=X_k{{\widehat{\widetilde{CZ}}}_{e\backslash\left\{k\right\}}}^{m_e(d-1)}$and
	$\scriptstyle g_k=$\\$\scriptstyle\left(\prod_{e\in \widehat{\widetilde E}}{{\widehat{\widetilde{CZ}}}_e}^{m_e}\right)X_k\left(\prod_{e\prime\in \widehat{\widetilde E}}{{\widehat{\widetilde{CZ}}}_{e\prime}}^{{d-m}_{e\prime}}\right)=X_k\prod_{e\in \widehat{\widetilde E},k\in e}{{\widehat{\widetilde{CZ}}}_{e\backslash\left\{k\right\}}}^{m_e(d-1)}$.\\$\  $
\end{proof}

\bibliography{references}
\end{document}